\newtheorem{theorem}{Theorem}[section]
\newtheorem{definition}[theorem]{Definition}
\newtheorem{corollary}[theorem]{Corollary}
\newenvironment{proof}
{\noindent
{\bf Proof.}}
{\hfill $\square$\medskip

}
\newcommand{\Id}{\operatorname{Id}}
\newcommand{\R}{\mathbb{R}}
\newcommand{\tr}{^{\rm t}}
\newcommand{\sprod}[1]{\langle #1 \rangle}
\newcommand{\br}{{\mathbf r}}
\newcommand{\bs}{{\mathbf s}}
\newcommand{\bx}{{\mathbf x}}
\newcommand{\bt}{{\mathbf t}}
\newcommand{\bc}{{\mathbf c}}
\newcommand{\bT}{{\mathbf T}}
\newcommand{\bp}{{\mathbf p}}
\newcommand{\bn}{{\mathbf n}}
\newcommand{\bZero}{{\mathbf 0}}
\newcommand{\bnu}{{\pmb \nu}}
\newcommand{\bxi}{{\pmb \xi}}
\pgfplotsset{compat=1.15}
\def\pgfplotfontsizetitle{\normalsize}
\def\pgfplotfontsize{\normalsize}
\def\tikzfontsizetiny{\scriptsize}
\pgfplotsset{
  mystyle/.style ={%
    grid = major,
    every tick label/.append style={font=\pgfplotfontsize},
    every axis label/.append style={font=\pgfplotfontsize},
    legend style={%
        font=\pgfplotfontsize,%
        draw=none,%
        /tikz/every even column/.append style={column sep=8pt}%
    },
    label style={font=\pgfplotfontsize},
    title style={font=\pgfplotfontsizetitle},
    /pgf/number format/set thousands separator = {}, 
  }
}
\newcommand{\upperSlopeTriangle}[4] 	
{
	\addplot[forget plot, domain=#3:#4,color=black,samples=2]{  #2 / (x^#1) } node (A1) [pos=1] {}; 
	\addplot[forget plot, domain=#3:#4,color=black,samples=2]{   #2  / (#3^#1)} node (A2) [pos=1] {} node [anchor=south,pos=0.5,black] {\tikzfontsizetiny $1$};
	\draw[color=black] (A1.center) -- (A2.center) node [anchor=west,pos=0.5,black] {\tikzfontsizetiny #1};
}
\newcommand{\lowerSlopeTriangle}[4] 	
{
	\addplot[forget plot, domain=#3:#4,color=black,samples=2]{  #2 / (x^#1) } node (A1) [pos=0] {}; 
	\addplot[forget plot, domain=#3:#4,color=black,samples=2]{   #2  / (#4^#1)} node (A2) [pos=0] {} node [anchor=north,pos=0.5,black] {\tikzfontsizetiny $1$};
	\draw[color=black] (A1.center) -- (A2.center) node [anchor=east,pos=0.5,black] {\tikzfontsizetiny #1};
}
\newcommand{\myaddgraphic}[5]
{
 \node[anchor=south west,inner sep=0] (image) {\phantom{\includegraphics[#2]{#1}}};
  \begin{scope}[x={(image.south east)},y={(image.north west)}]
      
      \begin{scope}
          
          #5
          
          \node[anchor=south west,inner sep=0] {\includegraphics[#2]{#1}};
      \end{scope} 
      
      #4
      
      \pgfmathparse{int(#3)} \let\gridIndicator\pgfmathresult
      \ifthenelse{ \gridIndicator = 1 }
      {
          \draw[help lines,xstep=.1,ystep=.1] (0,0) grid (1.001,1.001);
          \foreach \x in {1,...,9} { \node [anchor=north] at (\x/10,0) {\x};}
          \foreach \y in {1,...,9} { \node [anchor=east] at (0,\y/10) {\y};}
      }{}
      
  \end{scope}    
}
\newcommand{\fitByL}{standard\xspace}
\newcommand{\fitByRC}{constrained\xspace} 
\newcommand{\splineFit}{\mathbf{x}}
\newcommand{\splineDegree}{n}
\newcommand{\splineU}{u}
\newcommand{\splineV}{v}
\newcommand{\tikzfig}[4]{
\def\tkzscale{#2}
\def\dirtikz{tikz} 
\def\dirdata{data} 
\centering
\includegraphics{#1}
\caption{#3}
\label{#4}
}
\newcommand{\myHighlight}[1]{\color{black}#1\color{black}\xspace}
\begin{document}
\title{Surface Patches with Rounded Corners}

\author{Benjamin Marussig and Ulrich Reif}

\date{\today}
\maketitle

\begin{abstract} 
We analyze surface patches with a corner that is rounded 
in the sense that the partial derivatives at that point are
antiparallel. Sufficient conditions for $G^1$ smoothness are 
given, which, up to a certain degenerate case, are also 
necessary. Further, we investigate curvature integrability
and present examples.
\end{abstract}

\section{Introduction}
Surface parametrizations of form 
$\bx : [a,b] \times [c,d] \to \R^3$,
such as B\'ezier patches or NURBS surfaces,
are frequently used in geometric modeling. The regularity of $\bx$ 
in the sense that $\det D\bx \neq 0$ is a standard assumption
to guarantee the geometric smoothness of the trace of $\bx$, 
and also, most analytic tools from differential geometry rely on that assumption.
However, regularity implies that the shape of the parametrized
surface is necessarily four-sided. Moreover, the parametrically 
smooth contact of $n \neq 4$ such patches sharing a vertex in 
a composite model is impossible. Methods to overcome these 
restrictions include trimming \cite{marussig2017a}
and the concept of geometric continuity \cite{Peters02}. Another approach 
is based on deliberately
dropping regularity at isolated spots of the surface. This increases 
flexibility, but special care has to be taken that the resulting 
surfaces are geometrically smooth of order $G^k$ in a vicinity of 
the singularity in the sense that, locally, there exists a regular 
reparametrization of class $C^k$.

There exist different types of patches with singularities:
First, certain partial derivatives of $\bx$ can be set to zero
at a corner of the domain. In \cite{Reif97} and \cite{Bohl97}, 
conditions for $C^1$- and $C^2$-smoothness of such parametrizations 
are derived, see also \cite{Sederberg11}. These constructions are 
useful for the parametrically smooth contact of $n \neq 4$ patches 
meeting at a point.

Second, whole edges of the domain can be requested to collapse
to single points in the image of $\bx$, see for instance 
\cite{Yan14}.
This facilitates the representation of three- and also two-sided 
shapes. \myHighlight{In \cite{Speleers20211a}, an extraction matrix is utilized to construct $C^1$ smooth splines on such shapes, allowing smooth single-patch parametrizations of ellipsoids.} 

Third, the first partial derivatives of $\bx$ at certain corners of 
the domain may be {\em antiparallel}. This 
means that the edges sharing such corners are mapped to 
curves meeting at straight angles. Again, this construction 
admits the representation of three- or two-sided shapes. But it is
equally possible to parametrize, for instance, a hemispherical shape
by a single map $\bx : [0,1]^2 \to \R^3$, see \cref{sec:hemisphere}.

Surface patches with such {\em rounded corners} appear for instance in watertight Boolean operations presented in \cite{Urick19}. 
In this approach, turning points of trimming curves are utilized to define a layout of surface patches for constructing non-trimmed watertight boundary representations of volumes in $\R^3$.
In the resulting model, a turning point becomes a corner of the surfaces obtained, and if the initial trimming curve is smooth at this point, this corner is prone to be a rounded one.
Other applications can be found in isogeometric analysis, 
where patches with rounded corners may be convenient to 
parametrize the physical domain, see for instance \cite{Takacs12}.
The latter reference addresses aspects of Sobolev regularity, while a 
specific analysis of geometric smoothness seems to be unknown in the 
literature. This paper aims at filling that gap.

In \cref{sec:analysis}, we define surface patches with rounded corners 
and provide sufficient conditions for $C^1$-smoothness. These 
conditions are shown to be almost necessary in the sense that 
only cases with degeneracies of higher order are left undecided.
Further, we show that the principal curvatures of such $C^1$-patches 
are square-integrable. 
In \cref{sec:experiments}, we present two examples featuring the use of surface 
patches with rounded corners in applications.

\section{Analysis of rounded corners}
\label{sec:analysis}
To simplify notation, we consider surface patches with 
domain $[0,\myHighlight{H}]^2$ and analyze their behavior in the vicinity
of the vertex $(0,0)$. The generalization to an arbitrary corner 
of an arbitrary rectangle is straightforward.

\begin{definition}
\label{def:rounded}
Let 
\begin{equation}
\label{eq:s}
 \bx : [0,\myHighlight{H}]^2 \to \R^3
 ,\quad 
 \bx(u,v) = \sum_{j+k =0}^2 \frac{u^j v^k}{j!k!} \, \bxi_{j,k} 
 + O\bigl((u+v)^3\bigr)
 ,
\end{equation}
be a three times differentiable surface patch. It has a 
{\em rounded corner} 
at $(u,v) = (0,0)$ if
the following conditions are satisfied:
\begin{itemize}
\item {\bf Antiparallelism.}
There exists a unit vector $\bt \in \R^3$ and factors $\mu,\lambda>0$ such 
that 
\[
  \bxi_{1,0} = \lambda \bt
  ,\quad 
  \bxi_{0,1} = -\mu \bt
  .
\]
\item {\bf Coplanarity.} The three vectors
\[
  \br := \mu \bxi_{2,0} + \lambda \bxi_{1,1}
  ,\quad 
  \bs := \lambda \bxi_{0,2}+\mu \bxi_{1,1}
  ,
\]
and $\bt$ are linearly dependent.
\item {\bf Onesidedness.} \myHighlight{The three vectors satisfy} 
\[
 \sprod{\bt \times \br,\bt \times \bs} > 0 
  .
\]
\end{itemize}

\end{definition}

Throughout, we will assume that the size
$H$ of the domain is chosen so small that properties of the surface $\bx$ away from 
the origin can be discarded. \myHighlight{Furthermore, we define 
\[
  h := \max\{u,v\}.
  \]}

Antiparallelism of the partial derivatives $\bxi_{1,0},\bxi_{0,1}$ causes a 
loss of regularity of $\bx$ at the origin. This means that \myHighlight{geometric smoothness of the trace} cannot be taken for granted at that point, 
despite the smoothness of the parametrization. However, we are going to demonstrate
that coplanarity together with onesidedness guarantees that
the patch $\bx$ is $G^1$, meaning that there exists a 
regular $C^1$-parametrization of the trace of $\bx$ near the rounded corner.
We will also show that the $G^1$-property is lost if the vectors
$\br,\bs,\bt$ are linearly independent, or if the quadruple product 
is negative, leaving only the particular case 
$\sprod{\bt \times \br,\bt \times \bs} = 0$ undecided.

Onesidedness implies that the {\em limit normal}
\[
 \bn := 
 \frac{\bt \times \br}{\|\bt \times \br\|}
\]
is well defined. Together, $\bt$ and the {\em cross vector}
\[
 \bc := \bn \times \bt
\]
span the {\em limit tangent space}
\[
 \bT := \{\alpha \bt + \beta \bc : 
 (\alpha,\beta)\in \R^2\}
 ,
\]
and we observe that $\br,\bs \in \bT$. Denoting 
the {\em cross components} of $\br$ and $\bs$ by
\[
  \varrho := \sprod{\bc,\br}
  ,\quad 
  \sigma := \sprod{\bc,\bs}
  ,
\]
respectively, \myHighlight{we have} 
\begin{equation}
\label{eq:txr}
  \bt \times \br = \varrho \bn
  \quad \text{and}\quad \bt \times \bs = \sigma \bn
  .
\end{equation}
Onesidedness yields 
$\sprod{\bt \times \br,\bt \times \bs} = \varrho\sigma >0 $ so that
\begin{equation}
\label{eq:positive}
 \varrho = \|\bt \times \br\|>0
 \quad \text{and}\quad
 \sigma>0
 .
\end{equation}
Geometrically speaking, positivity of the cross components means that the 
vectors $\br$ and $\bs$ lie on the same side of the vector $\bt$ within the 
plane $\bT$, which accounts for the name of the third property in 
Definition~\ref{def:rounded}.

The following theorem shows that $\bn$ is in fact the limit 
of normal vectors 
\[
 \bnu := \frac{\bx_u \times \bx_v}{\|\bx_u \times \bx_v \|}
\]
at the rounded corner.
\begin{theorem}[Normal continuity.]
\label{thm:normcont}
The surface patch $\bx$ with a rounded corner according to 
Definition~\ref{def:rounded} is normal continuous at $(0,0)$ 
with
\[
 \bn = \lim_{(u,v)\to (0,0)} \bnu(u,v)
 .
\]
\end{theorem}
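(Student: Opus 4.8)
The plan is to compute the unnormalized normal $\bx_u\times\bx_v$ directly from the Taylor expansion \cref{eq:s} and to identify its leading behaviour as $(u,v)\to(0,0)$. First I would differentiate \cref{eq:s}, using that $\bx$ is three times differentiable, to obtain
\[
\bx_u = \bxi_{1,0} + u\bxi_{2,0} + v\bxi_{1,1} + O\bigl((u+v)^2\bigr),
\quad
\bx_v = \bxi_{0,1} + u\bxi_{1,1} + v\bxi_{0,2} + O\bigl((u+v)^2\bigr),
\]
and substitute antiparallelism $\bxi_{1,0}=\lambda\bt$, $\bxi_{0,1}=-\mu\bt$. The crucial feature is that the zeroth-order contribution $\lambda\bt\times(-\mu\bt)$ to the cross product vanishes, which is exactly the loss of regularity; consequently the limit of $\bnu$ is governed by the terms linear in $u$ and $v$.

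Next I would collect precisely those first-order terms. A short bilinear expansion shows that the coefficient of $u$ is $\bt\times(\mu\bxi_{2,0}+\lambda\bxi_{1,1})=\bt\times\br$ and the coefficient of $v$ is $\bt\times(\lambda\bxi_{0,2}+\mu\bxi_{1,1})=\bt\times\bs$, so that
\[
\bx_u\times\bx_v = u\,(\bt\times\br) + v\,(\bt\times\bs) + O\bigl((u+v)^2\bigr).
\]
Invoking \cref{eq:txr} rewrites the leading part as $(\varrho u+\sigma v)\,\bn$, reducing the whole expression to $(\varrho u+\sigma v)\,\bn + O\bigl((u+v)^2\bigr)$.

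The decisive step is then to exploit onesidedness. By \cref{eq:positive} both cross components $\varrho$ and $\sigma$ are strictly positive, and since $u,v\ge 0$ on the domain we get $\varrho u+\sigma v \ge \min\{\varrho,\sigma\}\,h>0$ for every $(u,v)\neq(0,0)$, while $(u+v)^2\le 4h^2$ bounds the remainder by $O(h^2)$. Dividing numerator and denominator of $\bnu$ by $\varrho u+\sigma v$ and using that this quantity dominates $h$, the quadratic remainder contributes only $O(h)$, yielding
\[
\bnu = \frac{\bn + O(h)}{\|\bn + O(h)\|}.
\]
Letting $h\to 0$ and recalling that $\bn$ is a unit vector gives $\bnu\to\bn$, as claimed.

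The main obstacle, and the reason onesidedness is indispensable, is controlling the limit uniformly in the direction of approach. Because the zeroth-order term vanishes, the first-order coefficient $\varrho u+\sigma v$ must dominate the quadratic remainder along \emph{every} path into the origin; this would fail if either cross component were zero or negative, since then the coefficient could vanish or change sign for some admissible $(u,v)$, destroying convergence to $\bn$. Positivity of both $\varrho$ and $\sigma$ is exactly what rules this out, so the estimate $\varrho u+\sigma v\ge \min\{\varrho,\sigma\}\,h$ is the heart of the argument.
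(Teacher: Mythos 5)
Your proposal is correct and follows essentially the same route as the paper's proof: expand the partial derivatives, observe that antiparallelism kills the zeroth-order term of $\bx_u\times\bx_v$, identify the linear coefficients as $\bt\times\br$ and $\bt\times\bs$, rewrite the leading part as $(\varrho u+\sigma v)\,\bn$ via \cref{eq:txr}, and use the lower bound $\varrho u+\sigma v\ge \min\{\varrho,\sigma\}\,h$ from onesidedness to show the quadratic remainder contributes only $O(h)$ after normalization. Your closing discussion of why positivity of both cross components is indispensable matches the paper's \cref{eq:lowerbnd} and its role exactly.
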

\begin{proof}
The partial derivatives
of $\bx$ are
\begin{equation}
\label{eq:partial}
 \bx_u = \lambda\,\bt + u\,\bxi_{2,0} +  v \,\bxi_{1,1} + O(h^2)
 ,\quad 
 \bx_v = -\mu\,\bt + v\,\bxi_{0,2} + u\,\bxi_{1,1} + O(h^2)
 .
\end{equation}
Their cross product is
\begin{align}
\notag
 \bx_u \times \bx_v &=
 \bt \times \bigl(
 u(\mu\, \bxi_{2,0}+\lambda\, \bxi_{1,1}) +
 v(\mu\, \bxi_{1,1}+\lambda\, \bxi_{0,2})\bigr)
 + O(h^2)\\
 \label{eq:cross}
 &=
 \bt \times ( u\, \br + v\, \bs ) + O(h^2) = 
 u\, \bt \times \br + v\, \bt \times \bs + O(h^2)\\
 &=
 \label{eq:cross_n}
 (\varrho u + \sigma v)\, \bn + O(h^2) 
 ,
\end{align}
where we used \eqref{eq:txr} to derive the last equality.
By positivity of the cross components according to \eqref{eq:positive}, 
the reciprocal of the first factor in \eqref{eq:cross_n} is bounded by 
\begin{equation}
\label{eq:lowerbnd}
 0 < \frac{1}{\varrho u + \sigma v} \le 
 \frac{1}{\min\{\varrho,\sigma\}h}
 = O(1/h)
 ,\quad 
 (u,v) \in [0,\myHighlight{H}]^2 \setminus \{(0,0)\}
 .
\end{equation}
Hence,
\[
 \frac{1}{\|\bx_u \times \bx_v\|}
 = \frac{1}{(\varrho u + \sigma v) + O(h^2)}
 = \frac{1}{(\varrho u + \sigma v)(1+O(h))} = O(1/h)
 ,
\]
and convergence of normal vectors follows from 
\[
 \bnu = 
 \frac{\bx_u \times \bx_v}{\|\bx_u\times \bx_v\|}
 = \frac{(\varrho u + \sigma v)\bn}{(\varrho u + \sigma v)(1+O(h))} 
 + O(h)
 = \bn + O(h)
 .
\]
\end{proof}
The next theorem clarifies that coplanarity and onesidedness are 
essential for normal continuity when antiparallelism is assumed.
\begin{theorem}[Normal discontinuity.]
\label{thm:normdiscont}
Let $\bx$ be a surface patch with antiparallel partial derivatives
$\bxi_{1,0},\bxi_{0,1}$ at the origin, as in Definition~\ref{def:rounded}. 
If the vectors $\br,\bs,\bt$ are linearly independent, or if
\[
 \sprod{\bt \times \br,\bt \times \bs} < 0
 ,
\]
then $\bx$ is not normal continuous.
\end{theorem}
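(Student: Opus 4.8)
The plan is to show that the Gauss map $\bnu$ possesses two different limits as the corner is approached along two different edges of the domain; since a genuine limit would have to agree with every such directional limit, exhibiting a disagreement rules out normal continuity. The starting point is the expansion
\[
 \bx_u \times \bx_v = u\,\bt\times\br + v\,\bt\times\bs + O(h^2),
\]
which is exactly \eqref{eq:cross}. Its derivation used only antiparallelism of $\bxi_{1,0},\bxi_{0,1}$, so it remains available here, where neither coplanarity nor onesidedness is assumed.

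First I would restrict to the edge $v=0$ and let $u\to 0^+$. There $h=u$, so $\bx_u\times\bx_v = u\bigl(\bt\times\br + O(u)\bigr)$ and hence $\bnu \to (\bt\times\br)/\|\bt\times\br\|$, provided $\bt\times\br\neq\bZero$. Symmetrically, along $u=0$ one finds $\bnu \to (\bt\times\bs)/\|\bt\times\bs\|$. These two unit vectors coincide if and only if $\bt\times\br$ and $\bt\times\bs$ are positive scalar multiples of each other, so it remains to check that this cannot happen under either hypothesis — and, along the way, that both cross products are nonzero, so that the normalized limits are well defined.

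In the linearly independent case I would argue that independence of $\br,\bs,\bt$ descends to independence of $\bt\times\br,\bt\times\bs$: if $a\,\bt\times\br + b\,\bt\times\bs = \bZero$, then $\bt\times(a\br + b\bs)=\bZero$, so $a\br+b\bs$ is a multiple of $\bt$, say $a\br+b\bs=c\bt$; linear independence of $\br,\bs,\bt$ then forces $a=b=c=0$. In the negative-product case, $\sprod{\bt\times\br,\bt\times\bs}<0$ immediately gives $\bt\times\br\neq\bZero\neq\bt\times\bs$, and if one had $\bt\times\br = \kappa\,\bt\times\bs$ with $\kappa>0$, the inner product would equal $\kappa\|\bt\times\bs\|^2\ge 0$, a contradiction. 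Either way the two edge limits are distinct, so $\lim_{(u,v)\to(0,0)}\bnu$ fails to exist and $\bx$ is not normal continuous.

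I do not anticipate a genuine obstacle, since the computation is a short perturbation of the one in Theorem~\ref{thm:normcont}. The only point requiring care is the bookkeeping that keeps the normalized directional limits meaningful, i.e.\ ensuring that $\bt\times\br$ and $\bt\times\bs$ do not vanish, which both hypotheses deliver.
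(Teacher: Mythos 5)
Your proposal is correct and takes essentially the same approach as the paper: both arguments use the expansion \eqref{eq:cross}, which needs only antiparallelism, to compute the limits of $\bnu$ along the two edges $v=0$ and $u=0$ and then show these limits disagree under either hypothesis. The only minor variations are that you verify non-coincidence of the limits by an elementary linear-algebra argument where the paper uses the identity $(\bt \times \br)\times(\bt \times \bs) = \det[\br,\bs,\bt]\,\bt$, and you handle the negative-product case directly without invoking coplanarity (the paper uses coplanarity to write the two limits as $\bn$ and $-\bn$), neither of which changes the substance of the proof.
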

\begin{proof}
First, let us assume that the vectors $\br,\bs,\bt$ are linearly independent.
Then also the vectors $\bn_1 := \bt \times \br$ and 
$\bn_2 := \bt \times \bs$ are linearly independent since 
$\bn_1 \times \bn_2 = \det [\br,\bs,\bt]\, \bt \neq \bZero$.
According to \eqref{eq:cross}, \myHighlight{we have}
\[
\bx_u \times \bx_v =
 u\, \bt \times \br + v\, \bt \times \bs + O(h^2)
 =
 u \bn_1 + v \bn_2 + O(h^2)
 .
\]
Comparing 
\[
 \bnu(u,0) = \frac{u \bn_1 + O(u^2)}{u\|\bn_1\|(1 + O(u))}
 = \frac{\bn_1}{\|\bn_1\|} + O(u)
\]
and
\[
 \bnu(0,v) = \frac{v\bn_2 + O(v^2)}{v\|\bn_2\|(1 + O(v))}
 = \frac{\bn_2}{\|\bn_2\|} + O(v)
\]
shows that $\bnu$ does not have a \myHighlight{unique} limit at $(0,0)$.

Second, let us assume that the vectors $\br,\bs,\bt$ are coplanar,
and that
$\sprod{\bt \times \br,\bt \times \bs} < 0$. Then we can follow 
the proof of the preceding theorem up to \eqref{eq:cross_n},
\[
  \bx_u \times \bx_v = 
 (\varrho u + \sigma v)\, \bn + O(h^2)
 ,
\]
but now, the factors $\varrho$ and $\sigma$ have opposite sign, 
\[
 \sprod{\bt \times \br,\bt \times \bs} = \varrho \sigma < 0
 .
\]
As before, comparing
\[
 \bnu(u,0) = 
 \frac{\varrho u\, \bn + O(u^2)}{|\varrho|u\, \|\bn\|(1 +O(u))}
 = \bn + O(u)
\]
and
\[
 \bnu(0,v) = 
 \frac{\sigma v\, \bn + O(v^2)}{|\sigma|v\, \|\bn\|(1 +O(v))}
 = -\bn + O(v)
\]
shows that $\bnu$ does not have a limit at $(0,0)$.
\end{proof}
The only case of surface patches with antiparallel partial derivatives
$\bxi_{1,0},\bxi_{0,1}$
not covered by \myHighlight{Theorems}~\ref{thm:normcont} and 
\ref{thm:normdiscont} is that of coplanar vectors $\br,\bs,\bt$
with $\sprod{\bt \times \br,\bt \times \bs} = 0$,
which represents a degeneracy of higher order.

Normal continuity is a relatively weak notion of smoothness since
it does not imply that the trace of the 
given parametrization is a smooth manifold. In particular, local 
self-intersections cannot be excluded. As an example, consider
the surface
\[
 \bx(u,v) = 
 \begin{bmatrix} 
 u^7 - 21 u^5 v^2 + 35u^3 v^4 - 7 u v^6\\
 v^7 - 21 v^5 u^2 + 35v^3 u^4 - 7 v u^6\\
 u^{10} + v^{10}
 \end{bmatrix}
 ,\quad 
 [u,v] \in [0,1]^2
 ,
\]
see \cref{fig:selfintersect} {\em (left)}. It is easily verified by 
inspection that 
\[
 \lim_{(u,v) \to (0,0)} \bnu(u,v) = (0,0,1)\tr
 ,
\]
but the projection of $\bx$ into the $xy$-plane is not injective,
see \cref{fig:selfintersect} {\em (right)}.
\begin{figure}
 \centering
 \includegraphics[height=6cm]{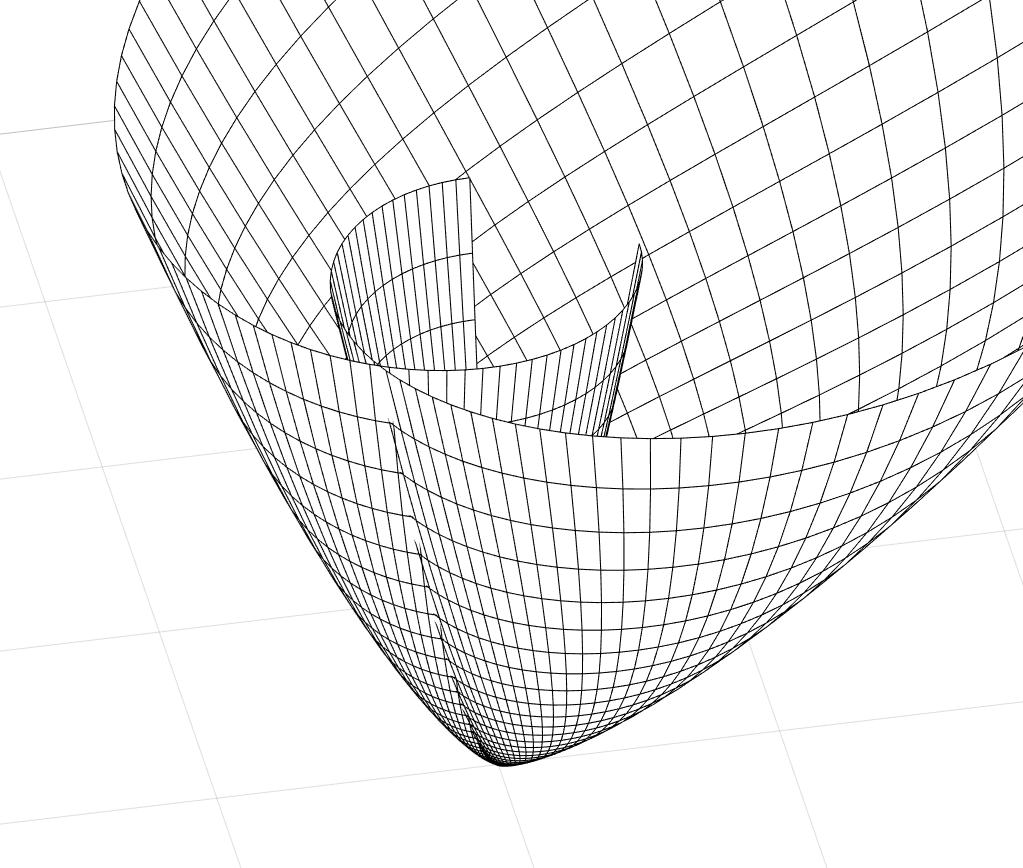}
 \qquad \qquad
 \includegraphics[height=6cm]{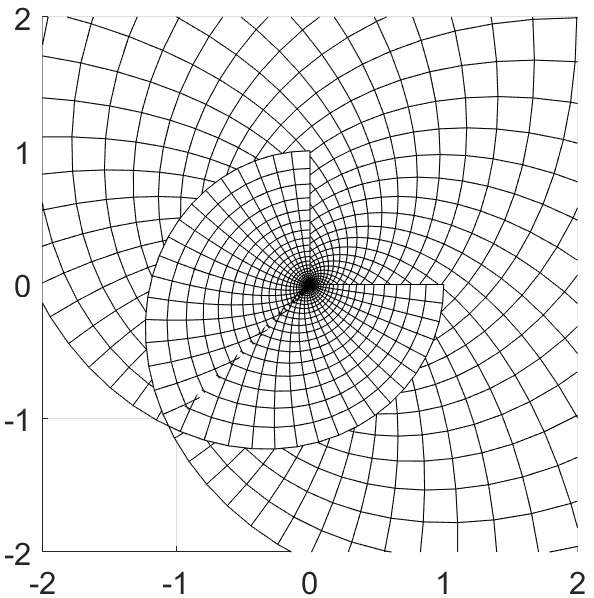}
 \caption{Normal-continuous surface with local self-intersection
 {\em (left)} and projection to tangent plane {\em (right)}.}
 \label{fig:selfintersect}
\end{figure}
The following result states that this cannot happen near 
rounded corners if the conditions of Definition~\ref{def:rounded}
are satisfied.
\begin{theorem}[Single-sheetedness.]
Let $\bx$ be a surface patch with a rounded corner according to
Definition~\ref{def:rounded}.
When restricted to a sufficiently small neighborhood $[0,H]^2$
of the origin, the orthogonal projection 
of $\bx-\bx_{0,0}$ to the limit tangent space $\bT$, \myHighlight{corresponding to the map} 
\[
 \Pi : [0,H]^2 \to \bT
 ,\quad 
 \Pi(\eta) := (\Id - \bn \bn\tr)\,(\bx(\eta)-\bx_{0,0})
 ,
\]
is injective.
\end{theorem}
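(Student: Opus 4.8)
The plan is to introduce coordinates on the limit tangent space $\bT$ via the orthonormal frame $\{\bt,\bc\}$ and to study the two scalar component functions $p := \sprod{\bt,\Pi}$ and $q := \sprod{\bc,\Pi}$. Since $\bt$ and $\bc$ are both orthogonal to $\bn$, the projection leaves these components unchanged, so $p = \sprod{\bt,\bx-\bx_{0,0}}$ and $q = \sprod{\bc,\bx-\bx_{0,0}}$. Inserting the expansion from \eqref{eq:s} and using $\bxi_{1,0}=\lambda\bt$, $\bxi_{0,1}=-\mu\bt$ together with $\sprod{\bc,\bt}=0$, one finds that $p = \lambda u - \mu v + O(h^2)$ carries a nondegenerate linear part, whereas the linear part of $q$ vanishes, so that $q$ is of quadratic order in $(u,v)$. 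This first step is routine bookkeeping of the Taylor expansions.

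Next I would compute the Jacobian of $\Pi$, represented by the $2\times 2$ matrix $J(\eta)$ whose columns are the $\{\bt,\bc\}$-coordinates of $\Pi_u$ and $\Pi_v$. Because $\Pi$ projects away the $\bn$-direction and $\bt\times\bc=\bn$, its coordinate Jacobian determinant equals $\sprod{\bn,\Pi_u\times\Pi_v}$, and the $\bn$-components of $\bx_u,\bx_v$ drop out of this quantity, so that $\det J = \sprod{\bn,\bx_u\times\bx_v}$. Hence \eqref{eq:cross_n} gives directly
\[
 \det J(\eta) = \varrho\,u + \sigma\,v + O(h^2),
\]
which, by positivity \eqref{eq:positive} and the bound \eqref{eq:lowerbnd}, is strictly positive on $[0,H]^2\setminus\{(0,0)\}$ once $H$ is small. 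This yields local injectivity but not yet the global statement, and bridging this gap is the main obstacle.

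To upgrade to global injectivity I would exploit the convexity of the domain. Suppose $\Pi(\eta_1)=\Pi(\eta_2)$ for distinct $\eta_1,\eta_2\in[0,H]^2$, and connect them by the segment $\eta(\tau)=\eta_2+\tau(\eta_1-\eta_2)$, $\tau\in[0,1]$, which stays in the domain. The fundamental theorem of calculus gives
\[
 \bZero = \Pi(\eta_1)-\Pi(\eta_2) = \Bigl(\int_0^1 J(\eta(\tau))\,d\tau\Bigr)(\eta_1-\eta_2) =: \bar J\,(\eta_1-\eta_2),
\]
so it suffices to show that the averaged matrix $\bar J$ is nonsingular. The key observation is that every entry of $J$ is, by the expansions above, an affine function of $(u,v)$ up to an $O(h^2)$ remainder; since the average of an affine function over a segment equals its value at the midpoint $\bar\eta=(\eta_1+\eta_2)/2$, one obtains $\bar J = J(\bar\eta)+O((H^*)^2)$ entrywise, where $H^*:=\max\{u_1,u_2,v_1,v_2\}$. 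Writing $\bar u,\bar v$ for the coordinates of $\bar\eta$, this gives
\[
 \det\bar J = \varrho\,\bar u + \sigma\,\bar v + O\bigl((H^*)^2\bigr).
\]

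Finally I would bound this away from zero. Because $\eta_1,\eta_2$ are distinct they are not both the origin, so $H^*>0$, and the coordinate attaining $H^*$ forces the corresponding midpoint coordinate to be at least $H^*/2$; hence $\varrho\bar u+\sigma\bar v \ge \tfrac12\min\{\varrho,\sigma\}\,H^*$. Combining this with the error estimate yields
\[
 \det\bar J \ge H^*\Bigl(\tfrac12\min\{\varrho,\sigma\} - C\,H^*\Bigr) > 0
\]
for a suitable constant $C$ and all $H$ small enough. Thus $\bar J$ is nonsingular, forcing $\eta_1=\eta_2$, a contradiction that establishes injectivity. The only genuinely delicate point is ensuring that the implied $O$-constants are uniform along the whole segment, which is inherited from the regularity of $\bx$ on the compact domain, exactly as in the expansions \eqref{eq:partial}.
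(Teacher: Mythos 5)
Your proof is correct, but it takes a genuinely different route from the paper. The paper also connects the two preimages by a segment (using convexity), but it then works with the \emph{scalar} function $g(t) = \delta\tr\,[\bt\ \ \bc]\tr\,\Pi(\eta_0 + M\delta t)$, where the special matrix $M = \bigl(\begin{smallmatrix}\lambda & \mu\\ -\mu & \lambda\end{smallmatrix}\bigr)$ is inserted precisely so that the resulting matrix $J = [\bt\ \ \bc]\tr D\bx\, M$ has a \emph{positive definite} symmetrization $J_s$ near the origin; the scalar mean value theorem then gives $\delta\tr J_s\delta = 0$ at a single intermediate point, forcing $\delta = 0$. Without the $M$-trick this symmetrization argument would fail (the symmetric part of $[\bt\ \ \bc]\tr D\bx$ alone is indefinite, since its off-diagonal entries are $\approx \pm\mu/2$ while one diagonal entry is $O(h)$). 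You avoid the need to guess $M$ entirely: you use the integral form of the mean value theorem to get the averaged Jacobian $\bar J$, exploit that the entries of $J$ are affine in $(u,v)$ up to $O(h^2)$ so that the average collapses to the value at the midpoint, and then bound $\det\bar J = \varrho\bar u + \sigma\bar v + O\bigl((H^*)^2\bigr)$ from below via the observation that the midpoint coordinate corresponding to the maximal coordinate is at least $H^*/2$. Your identity $\det J = \sprod{\bn,\bx_u\times\bx_v}$, which ties the coordinate Jacobian directly to \eqref{eq:cross_n}, is also a nice shortcut the paper does not use (it computes $J$ entrywise). The trade-off: the paper's argument is pointwise and very short once $M$ is found, but the choice of $M$ is unmotivated; your argument is more transparent and self-propelled, at the cost of needing the global affine-up-to-$O(h^2)$ structure of the Jacobian and slightly more careful determinant asymptotics. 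Both proofs rest on the same essential ingredients --- positivity \eqref{eq:positive}, the expansion \eqref{eq:cross_n}, convexity of the domain, and uniformity of the $O$-constants from three-times differentiability --- so I consider your proof complete and correct.
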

\begin{proof}
Assume that $\Pi(\eta_0) = \Pi(\eta_1)$
for $\eta_0,\eta_1 \in [0,H]^2$, and let 
\[
 \delta := M^{-1} (\eta_1 - \eta_0)
 ,\quad 
 M :=
 \begin{bmatrix}
 \lambda & \mu \\-\mu  & \lambda 
 \end{bmatrix}
 .
\]
Then the scalar function 
\[
 g(t) := \delta\tr \, [\bt\ \ \bc]\tr\, \Pi(\eta_0 + M \delta t)
\]
has equal values at $t=0$ and $t=1$,
\[
 g(1) - g(0) = \delta\tr \, [\bt\ \ \bc]\tr\, 
 \bigl( \Pi(\eta_1) - \Pi(\eta_0)\bigr) = 0
 .
\]
By the mean value theorem, there exists $\tau \in (0,1)$ such that
\[
 g'(\tau) = \delta\tr \, [\bt\ \ \bc]\tr\,
 (\Id - \bn \bn\tr)\, D\bx(u,v) M \delta = 
 \delta\tr\, \bigl([\bt\ \ \bc]\tr\, D\bx(u,v) M\bigr)\, \delta
 =0
 ,
\]
where $(u,v) = \eta_0 + M\delta \tau \in [0,H]^2\backslash \{(0,0)\}$ 
by convexity of the domain.
\myHighlight{Here, we used that, by definition, both $\bt$ and $\bc$ are perpendicular to $\bn$.}
Denoting the matrix in parentheses by $J$, we obtain 
$g'(\tau) = \delta\tr J \delta$.
With $D\bx = [\bx_u\ \bx_v]$ and \eqref{eq:partial}, a short computation 
yields
\[
 J := 
 \begin{bmatrix}
 \lambda^2 + \nu^2 + O(h) & O(h) \\
 O(h) & \varrho u + \sigma v + O(h^2)
 \end{bmatrix}
 .
\]
We symmetrize the quadratic form by setting $J_s := (J+J\tr)/2$ and use 
\eqref{eq:lowerbnd} again to obtain $g'(\tau) = \delta\tr J_s \delta$ with
\[
 J_s := 
 \begin{bmatrix}
 (\lambda^2 + \nu^2)(1 + O(h)) & O(h) \\
 O(h) & (\varrho u + \sigma v)(1 + O(h))
 \end{bmatrix}
 .
\]
Recalling $0<h = \max\{u,v\} \le H$,
\begin{align*}
 \det J_s &= 
 (\lambda^2 + \nu^2)(\varrho u + \sigma v)(1+O(h))\\
 \operatorname{trace} J_s &=
 (\lambda^2 + \nu^2)(1 + O(h))
\end{align*}
shows that $J_s$ is positive definite provided that $H$ is sufficiently small.
Eventually, $g'(\tau) = \delta\tr J_s \delta = 0$ implies $\delta = 0$ and 
$\eta_0 = \eta_1$, showing that $\Pi$ is injective.
\end{proof}
Together, normal continuity and single-sheetedness imply that 
the parametrization of the patch $\bx$ as a graph over the limit tangent space
is  $C^1$, see \cite[Theorem~2.13]{Peters08}. We state this result as
\begin{corollary}[$C^1$-regularity.]
A surface patch $\bx$ with a rounded corner according to
Definition~\ref{def:rounded} possesses a regular $C^1$-parametrization
in a neighborhood of that corner.
\end{corollary}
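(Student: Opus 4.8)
The plan is to reparametrize the trace of $\bx$ near the rounded corner as a graph over the limit tangent space $\bT$, since a $C^1$-graph over a plane is automatically a regular $C^1$-parametrization. Identifying $\bT$ with $\R^2$ through the orthonormal basis $\{\bt,\bc\}$ and completing it to the frame $\{\bt,\bc,\bn\}$ of $\R^3$, I would split the patch into its tangential and normal parts,
\[
 \bx(\eta)-\bx_{0,0} = \Pi(\eta) + f(\eta)\,\bn
 ,\quad
 f(\eta) := \sprod{\bn,\bx(\eta)-\bx_{0,0}}
 ,
\]
so that $\Pi(\eta)$ records the two coordinates in $\bT$ and $f(\eta)$ the height along $\bn$.

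By the Single-sheetedness theorem, $\Pi$ is injective on $[0,H]^2$ once $H$ is small enough. Being a continuous injection on a compact set, $\Pi$ is then a homeomorphism onto its image $\Omega := \Pi([0,H]^2)\subset\bT$, and I would define the graph function $F := f\circ\Pi^{-1} : \Omega \to \R$. The associated map $p \mapsto \bx_{0,0} + p + F(p)\,\bn$ is, by construction, a homeomorphism onto the trace; moreover its Jacobian consists of an identity block stacked over $\nabla F$ and therefore has full rank, so the parametrization is regular as soon as $F$ is shown to be $C^1$.

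It remains to establish that $F$ is $C^1$ on $\Omega$. Away from the corner the patch is regular, since by \eqref{eq:cross_n} and \eqref{eq:positive} one has $\|\bx_u\times\bx_v\| = \varrho u + \sigma v + O(h^2) > 0$ for $(u,v)\neq(0,0)$ in a sufficiently small domain. There $\Pi$ is moreover a local diffeomorphism: the coordinate matrix of $D\Pi$ is $[\bt\ \bc]\tr D\bx$, and its product with $M$ is the matrix $J$ of the Single-sheetedness proof, whose symmetric part $J_s$ is positive definite; hence $F$ is smooth on $\Omega\setminus\{\bZero\}$. Writing $\bnu = \nu_\bt\,\bt + \nu_\bc\,\bc + \nu_\bn\,\bn$, the graph gradient is $\nabla F = -(\nu_\bt,\nu_\bc)/\nu_\bn$ in the coordinates of $\bT$, and Theorem~\ref{thm:normcont} gives $\bnu\to\bn$, i.e. $(\nu_\bt,\nu_\bc)\to(0,0)$ and $\nu_\bn\to1$; thus $\nabla F$ extends continuously by $\bZero$ at the corner $\Pi(0,0)=\bZero$.

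The main obstacle is precisely this final point: one must show that $F$ is genuinely differentiable at the corner with the limiting gradient $\bZero$, not merely continuous there with a smooth restriction on $\Omega\setminus\{\bZero\}$. This is exactly what \cite[Theorem~2.13]{Peters08} supplies, as it converts normal continuity together with injectivity of the tangential projection into $C^1$-smoothness of the graph representation. Invoking that result with the two preceding theorems in hand then completes the proof.
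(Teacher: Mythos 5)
Your proposal is correct and takes essentially the same route as the paper: the paper's proof is a one-line appeal to normal continuity, single-sheetedness, and \cite[Theorem~2.13]{Peters08} to conclude that the graph representation of $\bx$ over the limit tangent space $\bT$ is $C^1$, which is precisely the structure of your argument. Your write-up simply makes explicit the graph decomposition, the homeomorphism property of $\Pi$, and the gradient computation via the unit normal, before deferring --- exactly as the paper does --- to the cited theorem for genuine differentiability at the corner.
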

The $G^1$-property of surface patches with a rounded corner is a
prerequisite for many design applications. However, also the asymptotic behavior
of the principal curvatures $\kappa_1,\kappa_2$ is significant. In particular, 
square integrability is requested
when such patches shall be used for the Ritz-Galerkin simulation of 4th order 
PDEs, like thin shell equations. The 
following theorem 
settles this issue.
\begin{theorem}[Curvature integrability.]
In a neighborhood of the rounded corner, the principal curvatures 
$\kappa_{1,2}$ of a surface patch $\bx$ 
according to Definition~\ref{def:rounded}
are almost in $L^3$ in the sense that  
\[
 \int_{[0,H]^2} |\kappa_{1,2}|^p \, d\mu < \infty
 \quad \text{for any}\quad 
 p\in [1,3)
 ,
\]
where $d\mu$ denotes the surface element of $\bx$ 
and $H>0$ is chosen sufficiently small.
\end{theorem}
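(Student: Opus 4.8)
The plan is to control the principal curvatures through the first and second fundamental forms and then to integrate the resulting bound over the parameter domain. First I would record the coefficients of the first fundamental form $E=\sprod{\bx_u,\bx_u}$, $F=\sprod{\bx_u,\bx_v}$, $G=\sprod{\bx_v,\bx_v}$ and of the second fundamental form $L=\sprod{\bx_{uu},\bnu}$, $M=\sprod{\bx_{uv},\bnu}$, $N=\sprod{\bx_{vv},\bnu}$, using \eqref{eq:partial} together with $\bx_{uu}=\bxi_{2,0}+O(h)$, $\bx_{uv}=\bxi_{1,1}+O(h)$, $\bx_{vv}=\bxi_{0,2}+O(h)$ and the expansion $\bnu=\bn+O(h)$ from \cref{thm:normcont}. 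At leading order the first fundamental form is
\[
 \mathrm I_0=\begin{bmatrix}\lambda^2 & -\lambda\mu\\ -\lambda\mu & \mu^2\end{bmatrix},
\]
which is singular, while its determinant $\det\mathrm I=EG-F^2=\|\bx_u\times\bx_v\|^2=(\varrho u+\sigma v)^2(1+O(h))$ is, by onesidedness and \eqref{eq:lowerbnd}, of exact order $\Theta(h^2)$.

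The heart of the argument is to exploit coplanarity. Since $\br,\bs\in\bT$ are orthogonal to $\bn$, the relations $\sprod{\bn,\br}=\mu\sprod{\bn,\bxi_{2,0}}+\lambda\sprod{\bn,\bxi_{1,1}}=0$ and $\sprod{\bn,\bs}=\lambda\sprod{\bn,\bxi_{0,2}}+\mu\sprod{\bn,\bxi_{1,1}}=0$ pin down the leading normal components of the second derivatives as $(L_0,M_0,N_0)=\tfrac{a}{\lambda^2}(\lambda^2,-\lambda\mu,\mu^2)$, where $a:=\sprod{\bn,\bxi_{2,0}}$. Hence the leading second fundamental form $\mathrm{II}_0=\tfrac{a}{\lambda^2}\mathrm I_0$ is proportional to $\mathrm I_0$, so $\operatorname{adj}(\mathrm I_0)\,\mathrm{II}_0=\tfrac{a}{\lambda^2}\operatorname{adj}(\mathrm I_0)\,\mathrm I_0=\tfrac{a}{\lambda^2}\det(\mathrm I_0)\,\Id=\bZero$. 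This forces \emph{both} curvature numerators to vanish at leading order: $\det\mathrm{II}=\det\mathrm{II}_0+O(h)=O(h)$ and $EN-2FM+GL=\operatorname{tr}\!\bigl(\operatorname{adj}(\mathrm I)\,\mathrm{II}\bigr)=\operatorname{tr}\!\bigl(\operatorname{adj}(\mathrm I_0)\,\mathrm{II}_0\bigr)+O(h)=O(h)$.

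Writing the shape operator as $S=\mathrm I^{-1}\mathrm{II}$, its determinant is the Gaussian curvature $\det S=\det\mathrm{II}/\det\mathrm I=O(h)/\Theta(h^2)=O(1/h)$ and its trace is twice the mean curvature, $\operatorname{tr}S=(EN-2FM+GL)/\det\mathrm I=O(1/h)$; crucially these are $O(1/h)$ rather than the naive $O(1/h^2)$. Since $\kappa_{1,2}$ are the eigenvalues of $S$, they solve $\kappa^2-(\operatorname{tr}S)\,\kappa+\det S=0$, whence $|\kappa_{1,2}|\le|\operatorname{tr}S|+\sqrt{|\det S|}=O(1/h)+O(1/\sqrt h)=O(1/h)$. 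Finally, with surface element $d\mu=\sqrt{EG-F^2}\,du\,dv=(\varrho u+\sigma v)(1+O(h))\,du\,dv=O(h)\,du\,dv$, I would estimate
\[
 \int_{[0,H]^2}|\kappa_{1,2}|^p\,d\mu\le C\int_{[0,H]^2} h^{1-p}\,du\,dv=2C\int_0^H u^{2-p}\,du,
\]
using $\int_{[0,H]^2}(\max\{u,v\})^{1-p}\,du\,dv=2\int_0^H u^{2-p}\,du$ by symmetry; the right-hand integral is finite exactly when $2-p>-1$, i.e. for every $p\in[1,3)$, which is the claim.

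The main obstacle is precisely the cancellation in the second paragraph. A crude estimate treats $\mathrm I^{-1}=O(1/h^2)$ and $\mathrm{II}=O(1)$ independently and yields $|\kappa_{1,2}|=O(1/h^2)$, which only gives integrability for $p<3/2$. The essential observation is that coplanarity makes $\mathrm{II}_0$ proportional to the singular matrix $\mathrm I_0$, so that both curvature numerators drop by one power of $h$; verifying this proportionality and that the surviving terms are genuinely $O(h)$—so that the $\Theta(h^2)$ denominator dominates—is the crux, after which the integration is routine.
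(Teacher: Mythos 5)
Your proof is correct and follows essentially the same route as the paper: both arguments exploit coplanarity ($\br,\bs \perp \bn$) to cancel the leading, $O(1/h^2)$-sized term of the shape operator, conclude $\kappa_{1,2}=O(1/h)$, and then integrate $h^{1-p}$ over $[0,H]^2$ exactly as you do. The only cosmetic difference is where the cancellation is observed: the paper computes the matrix $S=G^{-1}B$ and notes that its leading entries are $\mu\sprod{\br,\bn}$, $\mu\sprod{\bs,\bn}$, $\lambda\sprod{\br,\bn}$, $\lambda\sprod{\bs,\bn}$, all zero, whereas you route the same fact through the proportionality $\mathrm{II}_0 \propto \mathrm{I}_0$ and bound the eigenvalues via the scalar invariants $\operatorname{tr} S$ and $\det S$.
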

\begin{proof}
The first fundamental form of $\bx$ is
\[
 G :=
 \begin{bmatrix}
  \sprod{\bx_u,\bx_u} & \sprod{\bx_u,\bx_u} \\
  \sprod{\bx_u,\bx_v} & \sprod{\bx_v,\bx_v}
 \end{bmatrix}
 =
 \begin{bmatrix}
 \lambda^2 & -\lambda\mu \\
 -\lambda\mu & \mu^2
 \end{bmatrix} + O(h)
 .
\]
By \eqref{eq:cross_n} and \eqref{eq:lowerbnd}, its inverse is given by 
\[
 G^{-1} = 
 \frac{1}{\|\bx_u \times \bx_v\|^2}\,
 \begin{bmatrix}
 \mu^2 & \lambda\mu \\
 \lambda\mu & \lambda^2
 \end{bmatrix} + O(1/h)
 .
\]
With the second fundamental form
\[
 B :=
 \begin{bmatrix}
  \sprod{\bx_{uu},\bnu} & \sprod{\bx_{uv},\bnu} \\
  \sprod{\bx_{uv},\bnu} & \sprod{\bx_{vv},\bnu}
 \end{bmatrix}
 =
 \begin{bmatrix} 
 \sprod{\bxi_{2,0},\bn} & \sprod{\bxi_{1,1},\bn} \\
 \sprod{\bxi_{1,1},\bn} & \sprod{\bxi_{0,2},\bn}
 \end{bmatrix}
 + O(h)
 ,
\]
we obtain the shape operator
\[
  S := G^{-1} B =
 \frac{1}{\|\bx_u \times \bx_v\|^2}\,
 \begin{bmatrix}
  \mu \sprod{\br,\bn}  & \mu \sprod{\bs,\bn} \\
  \lambda \sprod{\br,\bn}  & \lambda \sprod{\bs,\bn}
 \end{bmatrix}
 + O(1/h)
 = O(1/h)
 .
\]
Its eigenvalues are the principal curvatures $\kappa_{1,2}$,
which are of the same order of magnitude,
\[
 \kappa_{1,2} = O(1/h).
\]
The surface element is $d\mu = \|\bx_u \times \bx_v\|\, dudv$,
where $\|\bx_u \times \bx_v\|= O(h)$. Hence,
there exists a constant $c$ such that 
$|\kappa_{1/2}|^p\, \|\bx_u \times \bx_v\|\le c h^{1-p}$, 
and we obtain\myHighlight{, using $h = \max\{u,v\}$,}
\[
 \int_{[0,H]^2} |\kappa_{1,2}|^p\, d\mu \le
 c \int_0^H\int_0^H h^{1-p}\, dudv
 =
 \frac{2cH^{3-p}}{3-p}
 < \infty
\]
for $p<3$, as claimed.
\end{proof}
In applications, surface patches are often given in B-spline format.
The following theorem specifies conditions for control points that
are equivalent to Definition~\ref{def:rounded}.
These conditions take the simplest form when the boundary knots
have maximal multiplicity so that we focus on that case.
In particular, B\'ezier patches are covered.
\begin{theorem}
\label{thm:spline}
Denote by $b^1_{j}$ and $b^2_{k}$ the B-splines of 
degrees $n_1,n_2 \ge 2$ with knots
\[
 T_1 = [\underbrace{0,\dots,0}_{n_1+1 {\rm\ 
times}},\tau_1^1,\tau_2^1,\dots,\tau^1_{N_1}]
 ,\quad
 T_2 = [\underbrace{0,\dots,0}_{n_2+1 {\rm\ 
times}},\tau_1^2,\tau_2^2,\dots,\tau^2_{N_2}]
 ,
\]
respectively, where $\tau^1_1,\tau^2_1>0$.
The spline surface
\[
 \bx(u,v) = \sum_{j= 0}^{N_1-1} \sum_{k=0}^{N_2-1}
 b^1_{j}(u) b^2_{k}(v)\, \bp_{j,k}
\]
with control points $\bp_{j,k} \in \R^3$ has a rounded corner at $(0,0)$
according to Definition~\ref{def:rounded}
if the following conditions are satisfied:
\begin{itemize}
\item
There exist weights $\alpha_1,\alpha_2 \in (0,1)$ with $\alpha_1+\alpha_2 = 1$
such that 
\begin{equation}
\label{eq:spline_anti}
 \bp_{0,0} = \alpha_1\bp_{1,0} + \alpha_2\bp_{0,1}
 .
\end{equation}
\item The three vectors
\begin{align}
  \br^* &:= (n_1-1) \tau^1_1 \alpha_1(\bp_{2,0}-\bp_{0,0}) +
  n_1\tau^1_2\alpha_2\,(\bp_{1,1}-\bp_{0,0}) \nonumber\\
  \bs^* &:= (n_2-1) \tau^2_1 \alpha_2(\bp_{0,2}-\bp_{0,0}) +
  n_2\tau^2_2\alpha_1(\bp_{1,1}-\bp_{0,0})\label{eq:spline_coplan}\\
  \bt^* &:= \bp_{1,0} - \bp_{0,1} \nonumber
\end{align}
are linearly dependent.
\item \myHighlight{We have} 
\begin{equation}
\label{eq:spline_positive}
 \sprod{\bt^* \times \br^*,\bt^* \times \bs^*} > 0 
  .
\end{equation}
\end{itemize}
\end{theorem}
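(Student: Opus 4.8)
The plan is to read off the Taylor coefficients $\bxi_{j,k}$ of the spline at $(0,0)$ from the control points and then to check the three conditions of Definition~\ref{def:rounded} one at a time. Everything rests on the behaviour of the boundary B-splines at the fully multiple left knot $u=0$: because of the multiplicity $n_1+1$, the function $b^1_j$ vanishes to order exactly $j$ there, so only $b^1_0,b^1_1,b^1_2$ contribute to a second-order expansion, and likewise in the $v$-direction. From the explicit first segment $b^1_0(u)=(1-u/\tau^1_1)^{n_1}$ together with the partition-of-unity identities $\sum_j b^1_j\equiv 1$ and $\sum_j\gamma_j b^1_j(u)\equiv u$ for the Greville abscissae $\gamma_0=0,\ \gamma_1=\tau^1_1/n_1,\ \gamma_2=(\tau^1_1+\tau^1_2)/n_1$, I would record
\[
 b^{1\prime}_0(0)=-\tfrac{n_1}{\tau^1_1},\quad b^{1\prime}_1(0)=\tfrac{n_1}{\tau^1_1},\quad
 b^{1\prime\prime}_0(0)=\tfrac{n_1(n_1-1)}{(\tau^1_1)^2},\quad b^{1\prime\prime}_2(0)=\tfrac{n_1(n_1-1)}{\tau^1_1\tau^1_2},
\]
the last value being the one that requires genuine work (it follows by solving the two relations $b^{1\prime\prime}_0+b^{1\prime\prime}_1+b^{1\prime\prime}_2=0$ and $\gamma_1 b^{1\prime\prime}_1(0)+\gamma_2 b^{1\prime\prime}_2(0)=0$). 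Using $b^2_k(0)=\delta_{k,0}$ then gives closed forms for $\bxi_{1,0},\bxi_{0,1}$ and for $\bxi_{2,0},\bxi_{1,1},\bxi_{0,2}$ as explicit combinations of the $\bp_{j,k}$.

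For antiparallelism I would observe that $\bxi_{1,0}=\frac{n_1}{\tau^1_1}(\bp_{1,0}-\bp_{0,0})$ and $\bxi_{0,1}=\frac{n_2}{\tau^2_1}(\bp_{0,1}-\bp_{0,0})$, and that condition \eqref{eq:spline_anti} rewrites these as $\frac{n_1}{\tau^1_1}\alpha_2\,\bt^*$ and $-\frac{n_2}{\tau^2_1}\alpha_1\,\bt^*$. Hence, with $\bt=\bt^*/\|\bt^*\|$, antiparallelism holds with $\lambda=\frac{n_1}{\tau^1_1}\alpha_2\|\bt^*\|>0$ and $\mu=\frac{n_2}{\tau^2_1}\alpha_1\|\bt^*\|>0$. (Note $\bt^*\neq\bZero$, since otherwise \eqref{eq:spline_positive} would fail.)

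The crux is the treatment of $\br=\mu\bxi_{2,0}+\lambda\bxi_{1,1}$ and $\bs=\lambda\bxi_{0,2}+\mu\bxi_{1,1}$. Since coplanarity and onesidedness involve only $\bt\times\br$ and $\bt\times\bs$, which are unchanged when multiples of $\bt$ are added to $\br$ or $\bs$, it suffices to identify $\br$ and $\bs$ modulo $\bt^*$. The decisive simplification is that \eqref{eq:spline_anti} forces $\bp_{0,0}\equiv\bp_{1,0}\equiv\bp_{0,1}\pmod{\bt^*}$. Using this together with $b^{1\prime\prime}_0(0)+b^{1\prime\prime}_1(0)+b^{1\prime\prime}_2(0)=0$, the many control-point terms collapse to
\[
 \bxi_{2,0}\equiv \tfrac{n_1(n_1-1)}{\tau^1_1\tau^1_2}(\bp_{2,0}-\bp_{0,0}),\qquad
 \bxi_{1,1}\equiv \tfrac{n_1 n_2}{\tau^1_1\tau^2_1}(\bp_{1,1}-\bp_{0,0})\pmod{\bt^*},
\]
and a short assembly yields $\br\equiv c_r\br^*\pmod{\bt^*}$ with the positive constant $c_r=\frac{n_1 n_2\|\bt^*\|}{(\tau^1_1)^2\tau^1_2\tau^2_1}$. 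The $u\leftrightarrow v$ symmetry (which simultaneously swaps $1\leftrightarrow 2$, $\alpha_1\leftrightarrow\alpha_2$, and $\lambda\leftrightarrow\mu$) then gives $\bs\equiv c_s\bs^*\pmod{\bt^*}$ with $c_s=\frac{n_1 n_2\|\bt^*\|}{(\tau^2_1)^2\tau^2_2\tau^1_1}>0$, sparing me a repetition of the algebra.

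To finish, write $\br=c_r\br^*+d_r\bt^*$, $\bs=c_s\bs^*+d_s\bt^*$ and $\bt=\bt^*/\|\bt^*\|$. Then $[\br\ \bs\ \bt]$ arises from $[\br^*\ \bs^*\ \bt^*]$ by right multiplication with a matrix of determinant $c_r c_s/\|\bt^*\|>0$, so $\br,\bs,\bt$ are linearly dependent if and only if $\br^*,\bs^*,\bt^*$ are, establishing coplanarity. For onesidedness, $\bt\times\bt^*=\bZero$ gives $\bt\times\br=c_r\,\bt\times\br^*$ and $\bt\times\bs=c_s\,\bt\times\bs^*$, whence
\[
 \sprod{\bt\times\br,\bt\times\bs}=\frac{c_r c_s}{\|\bt^*\|^2}\,\sprod{\bt^*\times\br^*,\bt^*\times\bs^*}>0
\]
by \eqref{eq:spline_positive}. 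I expect the only real obstacle to be computational rather than conceptual: pinning down $b^{1\prime\prime}_2(0)$ correctly and carrying out the reduction modulo $\bt^*$ cleanly enough that the surviving coefficients assemble exactly into $\br^*$ and $\bs^*$.
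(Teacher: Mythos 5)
Your proposal is correct and follows essentially the same route as the paper: both express the Taylor coefficients $\bxi_{j,k}$ through the control points, establish antiparallelism with the identical choices $\bt=\bt^*/\|\bt^*\|$, $\lambda=n_1\alpha_2\|\bt^*\|/\tau^1_1$, $\mu=n_2\alpha_1\|\bt^*\|/\tau^2_1$, and then show $\br$ and $\bs$ are positive multiples of $\br^*$ and $\bs^*$ modulo $\bt^*$ (with the same constants $c_r,c_s$), from which coplanarity and onesidedness follow. The only difference is that you explicitly derive the endpoint derivative values such as $b^{1\prime\prime}_2(0)=n_1(n_1-1)/(\tau^1_1\tau^1_2)$ via partition of unity and Greville abscissae, a computation the paper states without proof, and you exploit the $u\leftrightarrow v$ symmetry to avoid redoing the algebra for $\bs$.
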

\begin{proof}
In the following, $c_1,\dots,c_5$ denote real factors, the specific values of 
which are irrelevant.
The partial derivatives of $\bx$ are given by 
\begin{align*}
 \bxi_{0,0} &= \bp_{0,0}\\
 \bxi_{1,0} &= \frac{n_1}{\tau_1^1} (\bp_{1,0} - \bp_{0,0})\\
 \myHighlight{\bxi_{0,1}} & \myHighlight{= \frac{n_2}{\tau^2_1} (\bp_{0,1} - \bp_{0,0})}\\
 \bxi_{2,0} &= \frac{n_1(n_1-1)}{\tau^1_1\tau^1_2} 
 (\bp_{2,0} - \bp_{0,0}) +c_1 \bt^* \\
 \bxi_{1,1} &= \frac{n_1n_2}{\tau^1_1\tau^2_1}(\bp_{1,1} - \bp_{0,0}) + 
 c_2\bt^*\\
 \bxi_{0,2} &= \frac{n_2(n_2-1)}{\tau^2_1\tau^2_2}
 (\bp_{0,2} - \bp_{0,0}) + c_3\bt^* \\
\end{align*}
First, we observe that $\bt^* \neq \bZero$ because of 
\eqref{eq:spline_positive}. Further, by
\eqref{eq:spline_anti}, we have
$\bp_{1,0}-\bp_{0,0} = \alpha_2 \bt^*$ and 
$\bp_{0,1}-\bp_{0,0} = \alpha_1 \bt^*$. Hence, the condition of antiparallelism 
is satisfied with
\[
 \bt = \bt^*/\|\bt^*\|
 ,\quad 
 \lambda = n_1\alpha_2 \|\bt^*\|/\tau^1_1
 ,\quad
 \mu = n_2\alpha_1 \|\bt^*\|/\tau^2_1
 .
\]
Second, we find
\[
  \br = \frac{n_1 n_2\, \|\bt^*\|}{(\tau_1^1)^2\tau_1^2\tau_2^1}\, \br^*
  + c_4\bt^*
  ,\quad 
  \bs = \frac{n_1 n_2\, \|\bt^*\|}{(\tau_1^2)^2\tau_1^1\tau_2^2}\, \bs^*
  + c_5\bt^*
\]
and conclude that $\br,\bs,\bt$ are linearly dependent if so are 
$\br^*,\bs^*,\bt^*$, thus establishing coplanarity.
Third, positivity follows from 
\[
 \sprod{\bt\times \br,\bt \times \bs}
 =
 \frac{n_1^2 n_2^2}{(\tau^1_1\tau^2_1)^3\tau^1_2\tau^2_2}\,
 \sprod{\bt^*\times \br^*,\bt^* \times \bs^*} > 0
 .
\]
\end{proof}

\section{Experimental results}
\label{sec:experiments}

In this section, we apply the rounded corner constraints to B-spline models.
First, the impact of these conditions on the approximation of a 
hemisphere, its normal, and curvature is investigated. Then, they are utilized to 
improve the representation of a boat fender model. 

\subsection{Approximation of a hemisphere}
\label{sec:hemisphere}

The first example considers the approximation of a hemisphere with 
radius $r=1$ by a spline surface with four rounded corners. 
\Cref{fig:hemisphereTarget} illustrates the reference surface 
$\mathbf{y}$ and details its parametrization.
\begin{figure}[b!]
    \tikzfig{hemisphereTarget}{0.195}{%
        Parametrization of the hemisphere over the domain 
        $[-1,1]^2$.
    }{fig:hemisphereTarget}
\end{figure}
%
Based on $\mathbf{y}$, we construct single-patch B-spline surfaces
$\bx : [-1,1]^2 \to \R^3$ of bi-degree $(\splineDegree,\splineDegree)$ with knot spacing
$h = 2^{-\ell}$ for various values of $\splineDegree$ and $\ell$.
The following schemes are employed:
\begin{itemize}
    \item \fitByL: approximation by conventional $L^2$-projection
    \item \fitByRC: $L^2$-projection including the rounded corner 
constraints (\myHighlight{RCC})
\end{itemize}
In both cases, the $L^2$-projection is performed in two steps: first, 
the boundary control points are fitted in the 
$xy$-plane, and subsequently, the inner control points are computed. 
This procedure yields better visual comparability of coarse discretization. 
For the \fitByRC case, we set $\alpha_1 = \alpha_2 = 1/2$ and use the 
known limit 
normal $\mathbf{n}$ of each rounded corner to specify the 
corresponding orientation of the limit tangent space $\mathbf{T}$. 
Lagrange multipliers are used to enforce these conditions together 
with the antiparallelism and coplanarity constraints, i.e., 
\cref{eq:spline_anti} and \cref{eq:spline_coplan}. 
After the construction, we check for onesidedness 
\cref{eq:spline_positive}. In our experiments, this condition was never violated.

The implementation is first validated by a convergence 
study of the approximation error. Therefore, the maximal error of different B-splines with various degrees 
and element numbers are summarized in \cref{fig:hemispereMaxErrApproxAndNormal}(a). 
\begin{figure}[t]
    \tikzfig{hemisphereMaxErrorApproxAndNormal}{0.95}{%
        Relative maximal approximation error \emph{(left)} and maximal deviation of normal vectors \emph{(right)} of both schemes for different polynomial degrees $\splineDegree$ in dependence of the number of elements per parametric direction.}
        {fig:hemispereMaxErrApproxAndNormal}
\end{figure}
Note that both schemes obtain optimal convergence rates. In fact, the graphs of the \fitByL and the \fitByRC approach 
are almost identical, indicating that the effects of the constraints on the approximation power are marginal.

Let us now focus on geometric aspects of the approximation 
process.
First, we investigate the error in representing normal 
vectors, which is measured by the angle between the 
reference normal and that of the approximation. 
\cref{fig:hemispereMaxErrApproxAndNormal}(b) shows the maximal deviations of normal vectors, again for different degrees and both schemes.
It is worth noting that we never evaluate directly in a rounded corner, where the normal may be undefined. 
Obeying the constraints for rounded corners yields convergence of normals at rates growing with the chosen 
degree, while standard approximation performs significantly worse.
Those issues of the \fitByL scheme are induced by the loss of regularity near rounded corners.
In \cref{fig:hemisphereNormalErrorAlongDiagonal}, the error in representing 
normal vectors is plotted along a diagonal 
emanating from a rounded corner of a B-spline surface 
$\splineFit(\splineU,\splineV)$ with $\splineDegree=3$ and $\ell=3$.
To be precise, the errors are evaluated at $\splineU=\splineV=\alpha$ 
with $\alpha\in[10^{-7},0.1]$.
Note that with the proposed RCC, the error in the normal vector goes 
to zero, as $\alpha\rightarrow0$.
\begin{figure}[h!]
    \tikzfig{hemisphereNormalErrorAlongDiagonal}{0.95}{%
        Error distribution along a diagonal emitting from a rounded 
corner for a B-spline surface with degree $\splineDegree=3$ and $\ell=3$.}{fig:hemisphereNormalErrorAlongDiagonal}
\end{figure}
\begin{figure}[b!]
    \tikzfig{hemisphereZebraP2}{0.2}{%
        Reflection lines of a B-spline with $\splineDegree=2$ and  $\ell=2$. The white lines are the $C^1$ isolines of the surfaces 
and the close-ups show the reflections around a rounded corner.
    }{fig:hemisphereZebraP2}
\end{figure}

Finally, we demonstrate the impact of rounded corners on curvature.  
Therefore, \cref{fig:hemisphereZebraP2} 
shows the reflection lines of each approximation scheme for 
surfaces $\splineFit$ with $\splineDegree=2$ and $\ell=2$.
Note the irregularities close to the rounded corner in the \fitByL approach, which vanish in the \fitByRC case.

\subsection{Watertight boat fender model}

The following example utilizes the \fitByRC $L^2$-projection investigated in \cref{sec:hemisphere} in the 
context of a modeling process.
In particular, we consider a ``watertight'' boat fender B-spline 
model, i.e., the boundary representation has no trimmed patches, and 
all splines surfaces are connected by explicit continuity conditions.
The initial model is constructed by watertight Boolean operations 
detailed, in  \cite{Urick19}.
These operations connect intersecting surfaces in a non-trimmed 
$C^0$-continuous manner. At the same time, this construction may 
introduce rounded corners in the spline model.
\Cref{fig:boatfenderRenderingCloseUpInitial} illustrates the initial 
model of the boat fender. 
Note that the close-up shows four rendering defects. There the 
model possesses rounded corners. 
\begin{figure}[t]
    \tikzfig{boatfenderRenderingCloseUpInitial}{1.5}{%
        Initial watertight model of the boat fender with rendering 
        defects due to rounded 
corners.}{fig:boatfenderRenderingCloseUpInitial}
\end{figure}
\begin{figure}[!b]
    \tikzfig{boatfenderRenderingCloseUpRCCSetUp}{0.18}{%
        Implementation of rounded corner constraints (RCC): 
        (a) detection of a surface patch with rounded corners, 
        (b) adjustment of the control points of this patch by 
         \fitByRC $L^2$-projection, and (c) update of the control 
         points of the surface patches adjacent to the rounded 
        corner. 
        The orange reflection lines indicate the impact of the 
constraints. 
        Initial control points are shown in red, while the final ones 
are shown in green. In (c), they are  plotted on top of each other for 
better comparison.
    }{fig:boatfenderRenderingCloseUpRCCSetUp}
\end{figure}

\Cref{fig:boatfenderRenderingCloseUpRCCSetUp} outlines how the 
rounded corner constraints can be used to improve watertight models:
\begin{itemize}
    \item[(a)] detect all rounded corners and \myHighlight{the adjacent} surfaces,
    \item[(b)] employ the \fitByRC approximation scheme described in \cref{sec:hemisphere}, and
    \item[(c)] update the adjacent surfaces to maintain a watertight representation.
\end{itemize}
\myHighlight{Here, adjacent surfaces refer to surfaces that are connected to another surface's rounded corner. Their update is necessary since the \fitByRC approximation scheme affects the control points along the shared surface edge (cf.~red and green control points in \cref{fig:boatfenderRenderingCloseUpRCCSetUp}(c)).}
The resulting model is shown in \cref{fig:boatfenderRenderingCloseUpRCC}.

\begin{figure}[htb]
    \tikzfig{boatfenderRenderingCloseUpRCC}{1.5}{%
        Final watertight model of the boat fender with rounded corner constraints
    }{fig:boatfenderRenderingCloseUpRCC}
\end{figure}

\clearpage
\section{Acknowledgment}
The work of Benjamin Marussig was partially supported by the Austrian Research Promotion Agency (FFG, project 883886), and the joint DFG/FWF Collaborative Research Centre CREATOR (CRC/TRR 361, F90) at TU Darmstadt, TU Graz and JKU Linz.

\bibliographystyle{alpha}
\bibliography{references}

\end{document}